\newtheorem{theorem}{Theorem}
\def\BibTeX{{\rm B\kern-.05em{\sc i\kern-.025em b}\kern-.08em
    T\kern-.1667em\lower.7ex\hbox{E}\kern-.125emX}}
\newcommand{\ie}{i.\,e.\xspace}
\newcommand{\eg}{e.\,g.\xspace}
\newcommand{\impact}{\mu}
\newcommand{\privacy}{\phi}
\newcommand{\Privacies}{\Phi}
\newcommand{\dependency}{\Lambda}
\newcommand{\error}{\Delta}
\newcommand{\ignore}[1]{}
\newcommand{\red}[1]{{\color[rgb]{1,0,0} #1}}
\newcommand{\blue}[1]{{\color[rgb]{0,0,1} #1}}
\begin{document}

\newacronym{ack}{ACK}{Acknowledgement}
\newacronym{aoi}{AoI}{Area of Interest}
\newacronym{appc}{APPC}{Adaptive Push and Pull Algorithm for Clusters}
\newacronym{bt}{BT}{Bluetooth}
\newacronym{ble}{BLE}{Bluetooth Low Energy}
\newacronym{c2c}{C2C}{Car-to-Car}
\newacronym{cbdc}{CBDC}{Cluster Based Data Consistency}
\newacronym{cbrm}{CBRM}{Cluster Based Replica Management}
\newacronym{cs}{CS}{Content Store}
\newacronym{csv}{CSV}{Comma-separated Values}
\newacronym{d2d}{D2D}{Device-to-Device}
\newacronym{dis}{DIS}{Data Integration System}
\newacronym{dos}{DoS}{Denial-of-Service}
\newacronym{dsrc}{DSRC}{Dedicated Short-Range Communication}
\newacronym{dtn}{DTN}{Delay-Tollerant Networking}
\newacronym{fcc}{FCC}{Federal Communications Commission}
\newacronym{gcc}{GCC}{Global Cluster Cooperation}
\newacronym{gps}{GPS}{Global Positioning System}
\newacronym{gui}{GUI}{Graphical User Interface}
\newacronym{icn}{ICN}{Information-Centric Networking}
\newacronym{id}{ID}{identifier}
\newacronym{ieee}{IEEE}{Institute of Electrical and Electronics Engineers}
\newacronym{iot}{IoT}{Internet of Things}
\newacronym{ip}{IP}{Internet Protocol}
\newacronym{lte}{LTE}{Long Term Evolution}
\newacronym{lte-a}{LTE-A}{Long Term Evolution-Advanced}
\newacronym{lru}{LRU}{Least Recently Used}
\newacronym{madu}{MADU}{Mobility-Aware Data Update}
\newacronym{manet}{MANET}{Mobile Ad-hoc Network}
\newacronym{ndn}{NDN}{Named Data Networking}
\newacronym{obu}{OBU}{On-Board Unit}
\newacronym{p}{P}{Process}
\newacronym{p2p}{P2P}{Peer-to-Peer}
\newacronym{poa}{POA}{Place of Action}
\newacronym{poi}{POI}{Point of Interest}
\newacronym{qoe}{QoE}{Quality of Experience}
\newacronym{qos}{QoS}{Quality of Service}
\newacronym{rsu}{RSU}{Road Side Unit}
\newacronym{rtt}{RTT}{Round-Trip Time}
\newacronym{rttl}{RTTL}{Remaining Time to Live}
\newacronym{slaw}{SLAW}{Self-similar Least Action Walk}
\newacronym{tj}{TJ}{Traffic Jam}
\newacronym{ttl}{TTL}{Time to Live}
\newacronym[\glsshortpluralkey=VANETs,\glslongpluralkey=Vehicular Ad-hoc Networks]{vanet}{VANET}{Vehicular Ad-hoc Network}
\newacronym{wave}{WAVE}{Wireless Access in Vehicular Environments}
\newacronym{wifi}{WiFi}{Wireless Fidelity}
\newacronym{wlan}{WLAN}{Wireless Local Area Network}

\newacronym{most-common-information}{MJ}{Majority Voting}
\newacronym{newest-information}{NI}{Newest Information}

\newacronym[plural={ADAS}]{adas}{ADAS}{Advanced Driver Assistance System}

\newacronym{hmm}{HMM}{Hidden Markov Model}

\newacronym{qoi}{QoI}{Quality of Information}

\newacronym{hs}{HS}{Hidden State}
\newacronym{os}{OS}{Observable State}

\newacronym{sumo}{SUMO}{Simulation of Urban Mobility}
\newacronym{traci}{TraCI}{Traffic Control Interface}

\newacronym{wsn}{WSN}{Wireless Sensor Network}
\newacronym{gnss}{GNSS}{Global Navigation Satellite System}
\newacronym{lidar}{LIDAR}{Light Detection and Ranging}

\newacronym{pubsub}{Pub/Sub}{Publish/Subscribe}

\newacronym{cam}{CAM}{Cooperative Awareness Message}
\newacronym{den}{DEN}{Decentralized Environment Notification}
\newacronym{denm}{DEN}{Decentralized Environment Notification Message}
\newacronym{ivc}{IVC}{Inter-Vehicle Communication}
\newacronym{fcd}{FCD}{Floating Car Data}
\newacronym{v2v}{V2V}{Vehicle to Vehicle}
\newacronym{ch}{CH}{Cluster Head}
\newacronym{mbms}{MBMS}{Multimedia Broadcast/Multicast Service}

\title{Hide Me: Enabling Location Privacy in Heterogeneous Vehicular Networks \\

\author{\IEEEauthorblockN{Tobias Meuser\IEEEauthorrefmark{1}\IEEEauthorrefmark{3}, Oluwasegun Taiwo Ojo\IEEEauthorrefmark{3}, Daniel Bischoff\IEEEauthorrefmark{1},\\ Antonio Fern\'{a}ndez Anta\IEEEauthorrefmark{3}, Ioannis Stavrakakis\IEEEauthorrefmark{2}, Ralf Steinmetz\IEEEauthorrefmark{1}}
\IEEEauthorblockA{\IEEEauthorrefmark{1}%
Multimedia Communications Lab (KOM), Technische Universit{\"a}t Darmstadt, Darmstadt, Germany  \\
Email: \{tobias.meuser, daniel.bischoff, ralf.steinmetz\}@KOM.tu-darmstadt.de}
\IEEEauthorblockA{\IEEEauthorrefmark{2}%
National and Kapodistrian University of Athens, Greece  \\
Email: ioannis@di.uoa.gr}
\IEEEauthorblockA{\IEEEauthorrefmark{3}%
IMDEA Networks Institute, Madrid, Spain  \\
Email: \{oluwasegun.ojo@imdea.org, antonio.fernandez@imdea\}.org}}


\maketitle

\begin{abstract}
%
To support location-based services, vehicles must share their location with a server to receive relevant data, compromising their (location) privacy. To alleviate this privacy compromise, the vehicle's location can be obfuscated by adding artificial noise. Under limited available bandwidth, and since the area including the vehicle's location increases with the noise, the server will provide fewer data relevant to the vehicle's true location, reducing the effectiveness of a location-based service. To alleviate this problem, we propose that data relevant to a vehicle is also provided through direct, ad hoc communication by neighboring vehicles. Through such Vehicle-to-Vehicle (V2V) cooperation, the impact of location obfuscation is mitigated.  Since vehicles subscribe to data of (location-dependent) impact values, neighboring vehicles will subscribe to largely overlapping sets of data, reducing the benefit of V2V cooperation. To increase such benefit, we develop and study a non-cooperative game determining the data that a vehicle should subscribe to, aiming at maximizing its utilization while considering the participating (neighboring) vehicles. Our analysis and results show that the proposed V2V cooperation and derived strategy lead to significant performance increase compared to non-cooperative approaches and largely alleviates the impact of privacy on location-based services.
\end{abstract}

\begin{IEEEkeywords}
Floating Car Data, location-based services, location privacy, V2V communication.
\end{IEEEkeywords}
\begin{NoHyper}

\section{Introduction}
The vehicles of the future will be required to have increased awareness about their environment to assist drivers or support autonomous driving. This awareness has typically been provided by different sensors on board the vehicles, measuring vital data about the environment of the vehicle.
The context data provided by these sensors is limited to the vehicle's immediate environment due to the sensors' inherent physical limitations, such as their range. Nevertheless, information pertaining to an environment not in the immediate vicinity of a vehicle is also important for traffic safety, routing and navigation, as such vehicle can consider upcoming events in their planning. To make such information available to far away vehicles, first, vehicles receiving this information through local detection should communicate it to a remote server using an appropriate communications infrastructure such as the cellular network. Then, the vehicles desiring to receive such information should indicate so to the server, and receive it via a similar infrastructure. By exchanging their local perception of the environment via a cellular infrastructure, vehicles can supplement their local perception with distant data provided by other vehicles. 

In order for vehicles to get \gls{fcd} that are relevant to their context, they have to continuously share their location with a centralized back-end server in the network that is assumed to be a trusted entity. The server filters out the relevant \gls{fcd} for vehicles based on their location and distributes them accordingly. This continuous context and location exchange with a server is an invasion to the privacy of the vehicles. Consequently, privacy-sensitive users either have to accept this privacy invasion or turn off the option of receiving \gls{fcd}. Clearly, users that decide to turn off receipt of \gls{fcd} cannot benefit from location-based services and other services enabled by vehicular networks. It is therefore desirable to have a mechanism that allows for effective transfer of \gls{fcd} to and from the network while preserving the privacy of the users involved, like adding noise to the location (obfuscation). 



Since not all data are equally important to a vehicle, they are classified based on some impact level. Higher impact data are prioritized by the server since the available bandwidth is limited. In practice, vehicles subscribe to data of some (location-dependent) impact level, and the server provides to the vehicle all available data with matching impact. Privacy-concerned vehicles that do not report their exact location will end up receiving a smaller portion of data of a given impact due to the need to accommodate such data relevant to a larger geographic area. As a result, location-based services would be less effectively provided to privacy-concerned vehicles. 

To alleviate this problem and increase the amount of  location-relevant data provided to the vehicles, we propose that data relevant to a vehicle's location is also provided by neighboring vehicles through direct, ad hoc communication. That is, consider Vehicle-to-Vehicle (V2V) cooperation for exchanging local relevant data. Through this, the impact of location obfuscation (\ie increased area containing the vehicle's true location) is mitigated to some extent.  Such V2V cooperation based on vehicle clusters has also been considered in \cite{MBRS19}. As it will be discussed later and shown in the results, besides the higher complexity of a cluster-based approach, it also suffers greatly from connectivity problems reducing greatly its performance. In the current work, besides incorporating privacy considerations, the V2V communication is not cluster-based but ad hoc through direct V2V exchanges. 

Notice that neighboring vehicles without coordination are expected to subscribe to largely overlapping sets of data, reducing the potential benefit of V2V cooperation. To increase such benefit potential, we develop and study a non-cooperative game determining the data that a vehicle should subscribe to, so that the aforementioned overlap is reduced. The design goal is to maximize a properly defined utilization function as shaped by the participating (neighboring) vehicles as well. Our analysis and results show that the proposed V2V cooperation scheme and derived strategy lead to significant performance increase compared to non-cooperative approaches and largely alleviates the impact of privacy on location-based services.



In summary, the contributions of this work are the following: First, we introduce privacy considerations in the management of \gls{fcd} and point to its impact on location-based services; given a fixed bandwidth availability, some data are not forwarded to a vehicle due to privacy considerations and the implemented location obfuscation. Second, and in order to alleviate the latter problem, we propose that vehicles cooperate and forward relevant data to their neighboring vehicles, enhancing in principle the data received by a vehicle besides those directly from the remote server; an ad-hoc, direct V2V cooperation paradigm is employed instead of a cluster-based one, also showing the high performance deterioration of the latter in a real vehicular networking environment. Third, a major contribution is the development and study of a non-cooperative game determining the strategies (in terms of probabilities that a vehicle is forwarded by the server data of a given impact index) that vehicles should follow, so that a properly defined utility is maximized; this is shown to lead to a diversification of the data received directly from the server by neighboring vehicles and increases the effectiveness of V2V cooperation. Finally, the aforementioned contributions are supported through numerical and simulation results.

The rest of the article follows thus: In \autoref{sec:related_work}, we give a brief overview of some related work in the literature. In \autoref{sec:system_overview}, we provide an overview of the system scenario while in \autoref{sec:location_privacy}, we describe the influence of location privacy on the network. In \autoref{sec:game}, we describe our proposed game theoretic approach for privacy sensitive communication and we provide the necessary adaptations and assumptions necessary for the method in \autoref{sec:analysis}. In \autoref{sec:evaluation}, we evaluate the performance of our method and we conclude this paper in \autoref{sec:conclusion} with some discussions about our findings.

\section{Related Work}\label{sec:related_work}
Since vehicles have to continuously update their location to a centralized back-end server to receive \gls{fcd} that is relevant to their context. \ignore{, there is an implicit assumption that the server is being run by a trusted entity. However, in the case that an adversary gets control of such server, the adversary has access to all the information about a user, like their current location, favourite locations to visit, etc.} This is a breach of privacy for the users as their personal data can be shared with third parties. Several techniques have been introduced in the literature to protect users' privacy in vehicular networks\ignore{ and in location-based services}. Some of the common techniques include the use of pseudonyms \cite{Golle2004_1,Dotzer2005_1,7289480}, obfuscation \cite{xpan_1,7962836}, and the use of group communications \cite{Wasef2010_1,Sampigecaravan,7414507}.

The use of pseudonyms involves users taking on another identity (pseudonyms) to dissociate their actual identity from their data \cite{petit2014pseudonym}. The use of a single pseudonym is less effective and it is often required for users to change pseudonyms through their journey to sustain their level of privacy \cite{Gerlach_2007}. Such pseudonym changes are usually done in mix zones where drivers have to switch pseudonyms \cite{Palanisamy_2011}. These mix zones can be fixed \cite{Freudiger_109437} or specified dynamically \cite{BYing_1}. However, the use of pseudonyms have been shown to be less effective against a global eavesdropper\cite{wiedersheim2010privacy}, and especially in environments with low car density like on highways.  Furthermore, the use of pseudonyms usually focuses  on eavesdroppers monitoring V2V communications and involves having to deal with a trusted (or semi-trusted) server which coordinates the assignments of pseudonyms \cite{Sampigecaravan}. This still involves trusting a central server which is a risk in the case that an adversary gets hold of such server. Our work focuses on the privacy of users in their communications with the central server.  

Likewise, obfuscation has been used extensively in privacy protection in vehicular networks and location-based services. Obfuscation involves users providing one (i) an inaccurate location, (ii) an imprecise region including their real location, or (iii) a vague description of their location \cite{duckham2006location}. To quantify the effectiveness of obfuscation, metrics like \textit{k-anonimity}, which means a user's location is indistinguishable from $k-1$ other users, have been introduced \cite{Gruteser2003AUL,BenNiu}. The imprecision added into the location of the users usually leads to users getting less relevant data and, thus, a decrease in efficiency. Our method mitigates against this decrease in performance by implicitly cooperating with other vehicles to get relevant updates through V2V communication. 

For this purpose, game-theory has been applied to model aspects of privacy, especially in mobile networks and location-based services \cite{XLiu_Game,Freudiger_Game} and in security and privacy assessment of vehicular networks \cite{du2014attack}. Distinct from previous studies, our work focuses on privacy of users in their communications with the server considering the impact of the messages to the user. We adopt an obfuscation technique by reporting a region instead of their exact location and mitigate against the resulting reduction in performance by implicitly cooperating the vehicles through a game-theoretic approach, which maximizes the relevant data received by the vehicles. 

\section{System Model}\label{sec:system_overview}
In the following, we provide an overview of the considered system model.
We assume a context-aware vehicular network, in which a central server transmits context-sensitive messages to interested vehicles.
In this network, time is assumed to be slotted (typically $1s$).
Every vehicle has a limited average bandwidth $A$ (in bits per time slot) to receive these messages via the cellular network.
This assigned bandwidth is generally low compared to the maximum (physically) available bandwidth, such that vehicles may exceed this bandwidth temporarily (as long as the average consumed bandwidth matches the predefined value).
A message contains \gls{fcd} as payload, as well as additional meta-information such as the measurement location, measurement time, and type of \gls{fcd}.
In this work, we assume that \gls{fcd} refer to or carry road-related information (e.g., accidents, traffic jams, traffic flow information, etc.) that can be useful for improving the driving behavior of the vehicles in proximity. 
Let $a(m)$ (in bits) denote the size of such message $m$, $r(m)$ the radius of its dissemination area, and $\impact(m)$ its impact, which all depend on the type of contained \gls{fcd}; \eg, an accident has generally higher impact than traffic flow information.
We use this notation as our bandwidth is limited, and thus, the impact per utilized bandwidth is pivotal for our approach.
Based on the message impact $\impact(m)$ and the dissemination area $r(m)$, we divide messages in $n_\impact$ impact levels; an impact level $i \in \{1, \hdots, n_\impact\}$ contains messages if $r(m)=r_i$ and $\impact(m) \in [\impact_i, \impact_{i+1})$, where $r_i$ is radius of the dissemination area and $\impact_i$ the impact per bit assigned to impact level $i$.
For every impact level, there is an expected impact $\overline{\impact}_i$, which refers to the average impact per bit of a message in this impact level.
For $i = n_\impact$, we set the upper bound $\overline{\impact}_{i+1} = \infty$.


A vehicle can control the reception of messages from the server by expressing interest in certain \textit{impact} levels and by providing a \textit{representation of its location}. More specifically, a vehicle wants to receive a message $m$ if (i) it has expressed interest in the corresponding impact level $i$ of the message (ii) the vehicle's location is at most at distance $r_i$ from the source of the \gls{fcd}.
Let $a_i$ denote the traffic load of messages of impact level $i$ (in bits per time slot) expected for the vehicle if the provided location is accurate.
A vehicle is either interested in an impact level or not, \ie, receives either all or no messages of this impact level.
This interest can be changed dynamically at the beginning of every time slot.
Depending on the assumed privacy-sensitivity (referred to as privacy-level) $\privacy \in \Privacies$ of a vehicle $v$, the aforementioned \textit{representation of the location} may be accurate or may be imprecise.
We implement this imprecision by providing only an (circular) area in which the vehicle is certainly located (uniformly distributed), without actually revealing the exact location to the server.
The privacy level $\privacy$ chosen by the respective vehicle determines the radius $r_\privacy$ of this area.
That imprecise representation of the location increases the load of received messages due to the less accurate server-side filtering.
To capture the additional bandwidth consumption, let $a_{\privacy,i} > a_i$ denote the load of messages (in bits) expected in a certain by a vehicle impact level $i$ for a vehicle at a privacy level $\privacy$.

The central server uses the announced interest of the vehicles to actively push new messages (\ie, messages containing yet unknown \gls{fcd}) via the cellular network to them. Since the available bandwidth is assumed to be limited, a vehicle aims to maximize the total impact of the received messages, which is achieved by dropping low-impact messages if the bandwidth is insufficient.
To maximize that total impact of received messages, vehicles may cooperate to share bandwidth for the reception of messages;  \ie, vehicles can locally broadcast messages, received via the cellular network, without additional costs to notify vehicles in their proximity.
Thus, not every vehicle needs to receive all messages of its interest via the limited cellular bandwidth, as these messages might be provided by its neighbors.
\section{Influence of Location Obfuscation}\label{sec:location_privacy}
In the following, we provide an insight on the influence of privacy in our model.
Each privacy level $\privacy > 1$ adds a certain level of imprecision to the provided location, while $\privacy=1$ refers to no privacy-sensitivity.
The privacy-sensitivity and, thus, location imprecision increases with $\privacy$ and reduces the accuracy of the context-based message filtering at the server-side.
Thus, a vehicle receives messages not relevant for its current context, while its share of relevant messages is reduced.
This influences the number of received messages $n_{\privacy,i}$ and their expected impact per bit $\overline{\impact}_{\privacy,i}$ for a privacy state $\privacy$ and an impact-level $i$. 
The number of messages received typically increases with increasing privacy level, while the expected impact per bit of a message decreases. 
We reflect this change for every impact level $i$ by the \textit{adaptation factor} $\rho_{\privacy, i}$  as shown in \autoref{eq:bandwidth_privacy} and
\autoref{eq:impact_privacy}.
\begin{equation}\label{eq:bandwidth_privacy}
    a_{\privacy, i} = a_i \cdot \rho_{\privacy,i}
\end{equation}
\begin{equation}\label{eq:impact_privacy}
    \overline{\impact}_{\privacy, i} = \dfrac{\overline{\impact}_i}{\rho_{\privacy,i}}
\end{equation}
$\rho_{\privacy,i}$ depends on the context-sensitivity of the distributed messages for a vehicle of privacy level $\privacy$ receiving messages with impact level $i$.
For non-context-sensitive messages, $\rho_{\privacy,i} = 1,\forall \privacy \in \Privacies$.
For context-sensitive of messages, \ie, a messages with a specific distribution-area with radius $r_i$, $\rho_{\privacy,i} \geq 1, \forall \privacy \in \Privacies$.
These statements are proven in \autoref{proof:bandwidth_decrease}.
\begin{theorem}\label{proof:bandwidth_decrease}
The \textit{adaptation factor} for a network with uniformly distributed messages is $\rho_{\privacy,i}=\left(\nicefrac{r_\privacy}{r_i} + 1\right)^2$ for a circular geocast-area and a circular location-imprecision, where $r_i$ is the radius of the geocast-area of the message of impact level $i$ and  $r_\privacy$ is the radius of the location-imprecision area of privacy-level $\privacy$.
\end{theorem}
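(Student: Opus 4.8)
The plan is to compute, for both the privacy-free and the obfuscated case, the region of the plane from which message sources of a given impact level are forwarded to the vehicle, and then to invoke the uniform spatial density of messages to turn these regions into expected traffic loads whose ratio is exactly $\rho_{\privacy,i}$.

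First I would fix an impact level $i$ and a vehicle whose true position is a point $p$. By the reception rule of \autoref{sec:system_overview}, without obfuscation the vehicle receives every message of impact level $i$ whose source lies at distance at most $r_i$ from $p$, i.e., whose source falls in the disk $D(p, r_i)$ of area $\pi r_i^2$. Under obfuscation with privacy level $\privacy$, the vehicle instead reports a disk $D(c, r_\privacy)$ that is guaranteed to contain $p$; the server, lacking the exact location, must forward a message whenever it is \emph{possible} that the true location is within $r_i$ of the source, that is, whenever the source disk $D(s, r_i)$ intersects the reported disk $D(c, r_\privacy)$. The key (elementary) geometric observation is that $D(s, r_i) \cap D(c, r_\privacy) \neq \emptyset$ if and only if $\lVert s - c\rVert \le r_i + r_\privacy$, so the set of forwarded sources is exactly the disk $D(c, r_i + r_\privacy)$, of area $\pi (r_i + r_\privacy)^2$, independently of where $p$ lies inside the reported disk.

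Next I would use the theorem's hypothesis that messages (equivalently, their sources) are uniformly distributed in space, so that the expected number of messages of impact level $i$ with source in a region, and hence, since the per-message size distribution is unaffected by obfuscation, the expected traffic load, is proportional to the area of that region. Taking the ratio of the two areas then yields
\begin{equation*}
\rho_{\privacy,i} = \frac{a_{\privacy,i}}{a_i} = \frac{\pi (r_i + r_\privacy)^2}{\pi r_i^2} = \left(\frac{r_\privacy}{r_i} + 1\right)^2,
\end{equation*}
which is the claimed expression; in particular $\rho_{\privacy,i} \ge 1$, with equality iff $r_\privacy = 0$, recovering the statements made just before the theorem (and $\rho_{\privacy,i}=1$ for non-context-sensitive messages, where there is effectively no dissemination radius to widen).

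The main obstacle I anticipate is not the algebra but pinning down the filtering semantics precisely: one must argue that the server's only sound choice is to forward every message whose dissemination disk could contain the true location, and that the reported region being a disk of radius $r_\privacy$ is what makes the intersection condition collapse to the clean distance bound $\lVert s-c\rVert \le r_i + r_\privacy$. A secondary point to treat carefully is boundary effects: strictly, proportionality of expected load to area needs either an unbounded uniform field of sources or that $p$ is far from the edge of the service area, but under the stated ``uniformly distributed messages'' assumption this is immediate and the computation above is exact.
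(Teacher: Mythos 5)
Your proof is correct and follows essentially the same route as the paper: compare the area $\pi r_i^2$ of the relevance disk without obfuscation to the area $\pi(r_\privacy + r_i)^2$ the server must cover to guarantee delivery, and take the ratio under the uniform-message assumption. You simply make explicit two points the paper leaves implicit (the disk-intersection criterion $\lVert s-c\rVert \le r_i + r_\privacy$ and the conversion of areas to expected loads via uniformity), which is a welcome sharpening but not a different argument.
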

\begin{proof}
Without location privacy, the vehicle receives all messages with a maximum distance of $r_i$ to its current location. Thus, area of interest for the vehicle is $\pi \cdot r_i^2$.
If the vehicle reduces the precision of its location by hiding inside an area of radius $r_\privacy$, the server will need to transmit all messages within a distance of $r_\privacy + r_i$ from the center of the area to ensure that the vehicle receives all relevant messages.
The size of this area is $\pi \cdot (r_\privacy + r_i)^2$.
This leads to $\rho_{\privacy,i} = \left(\nicefrac{r_\privacy}{r_i} + 1\right)^2$.
\end{proof}
\section{Game-Theoretic Model for Privacy-Sensitive Communication}\label{sec:game}
To enhance the performance of our impact-aware vehicular network, \ignore{(THE CLUSTER BASED ISSUES IS NOT THE REASON WE USE GAMES. WE USE COOPERATION UNDER NON-COOPERATIVE GAMES IN ORDER TO ENHANCE THE IMPACT OF RECEIVED MESSAGES VIA CELLULAR ONLY. WE DO NOT USE GAMES BECAUSE CLUSTER-BASED IS AN ISSUE. SEE OUTLINE OF CONTRIBUTIONS IN INTRO AND MAKE SURE WE ARE CONSISTENT WITH OUR STORY EVERYWHERE)} we employ a game-theoretic model with the aim to maximize the sum of impact of the received messages. 
Our innovative approach relies only on the number $n_\privacy$ of vehicles of each privacy-level $\privacy$ in proximity to find a mixed Nash-optimal solution for our developed game-theoretic model, \ie, vehicles receive messages with a certain probability.
In our game, each actor (vehicle) aims to find the strategy (receive messages in a certain impact-range via the cellular network) that maximizes its utility (sum of impact values of all received messages, directly via cellular or from the neighbors) while sticking to cellular bandwidth constraints.
This game is played periodically in every time slot to adjust the vehicles behavior to environmental changes, \ie changes in the number of neighbors in proximity and changes in number of messages.
Notice that vehicles are assumed to cooperate; thus, a vehicle might additionally receive messages directly by vehicles in proximity.
The intuition behind this game model is that high-impact messages are generally prioritized, as their bandwidth usage is more efficient compared to low-impact messages.
Thus, vehicles may rely on their neighbors to provide some high-impact messages to them, as a number of neighbors aims to receive these high-impact messages.
These vehicles can then use a part of their available cellular bandwidth to receive low-impact messages and share these with their neighbors. The idea is similar to cooperative caching. Instead of storing all high-demand message at every local cache, some nodes fetch low-demand messages instead and satisfy the request of high-demand messages from nearby cooperative caches~\cite{1717403}.\ignore{INSTEAD OF HAVING ALL HIGH DEMAND MESSAGES BROUGHT TO EACH OF THE LOCAL CACHE OF COOPERATING NODES, BRING SOME LOW DEMAND INSTEAD AND RETRIEVE ANY HIGH DEMAND NOT LOCALLY AVAILABLE FROM NEAR-BY COOPERATING CACHE; THE HIT RATIO IS IMPROVED FOR ALL CACHES THIS WAY. REF: N. Laoutaris, O. Telelis, V. Zissimopoulos, I. Stavrakakis, "Distributed Selfish Replication," IEEE Transactions on Parallel and Distributed Systems, vol. 17, no. 12, pp. 1401-1413, Dec., 2006.}

The vehicles are the only \textit{actors} in this game; the server is not directly involved, but only determines the set of receivers of messages based on the strategies chosen by the vehicles.
For this purpose, the vehicles share their strategy in the form of subscriptions with the server.
The \textit{strategy} is represented as a vector $\vec{p_\privacy}$ with $n_\impact$ probability entries $p_{\privacy, i}$ with $i \in \{1, \hdots, n_\impact\}$ and depends on the chosen privacy level of the vehicle $\privacy_e$.
Each entry $p_{\privacy,i}$ refers to the probability of the tagged vehicles to receive messages of the corresponding impact level.
Additionally, $0 \leq p_{\privacy,i} \leq 1,\forall p_{\privacy,i} \in \vec{p_\privacy}$.
For the assignment of messages to an impact level, we use the expected impact $\overline{\impact}_i$, \ie, the average impact of a message of that type given it is relevant for the vehicle.
$\overline{\impact}_i$ does not depend on the privacy level $\privacy$.
The privacy-dependent message impact $\overline{\impact}_{\privacy,i}$ is only used for the calculation of the utility for a vehicle.
In the calculation, $\vec{p_\privacy}$ needs to be chosen such that \autoref{eq:bandwidth} holds, with $a_{\privacy,i}$ being the expected number of received messages of impact level $i$ and privacy level $\privacy$ according to \autoref{eq:bandwidth_privacy}, and $A$ being the usable bandwidth.

\begin{equation}\label{eq:bandwidth}
    \sum_{i=1}^{n_\impact} a_{\privacy,i} \cdot p_{\privacy,i} \leq A
\end{equation}

\ignore{\blue{Notice that this differs from previous work like \cite{MBRS19}, in which the vehicle intended to receive all messages $\{m | \impact_i \leq \impact(m)\}$.
The advantage of our new model is the additional flexibility provided by removing the overlap between strategies, which improves the coordination between vehicles.}}

Notice that this differs from previous work like \cite{MBRS19}, in which the vehicle intended to receive all messages $\{m | \impact_i \leq \impact(m)\}$.
The advantage of our new model is the additional flexibility provided by removing some of the message overlap among neighboring vehicles, which improves the total impact of received messages (via cellular and direct neighbor forwarding) by each vehicles.

The \textit{utility} each vehicle aims to maximize captures the value of the received messages, \ie, refers to the sum of impact.
The utility is defined according to \autoref{eq:utility} based on the sent messages $M_{snt}$, the received messages $M_{rcv}$, and the impact $\impact(m)$ of a message $m$.
$\mathbb{I}_{\{m \in M_{rcv}\}}$ is an indicator function indicating whether a message $m$ has been received by the vehicle.

\begin{equation}\label{eq:utility}
    u = \sum_{m \in M_{snt}} \impact(m) \cdot a(m) \cdot \mathbb{I}_{\{m \in M_{rcv}\}}
\end{equation}

As the probability of a vehicle receiving a message depends on $\vec{p_\privacy}$, we derive the expectation of the utility based on \autoref{eq:utility}.
For this purpose, we assume that the environment of each vehicle is rather similar, so that the strategies of two vehicles with the same privacy level are similar.
Thus, the strategy of every privacy level can be calculated by every vehicle in proximity, which is the basis of our offloading approach.
Thus, we only use the strategies $\vec{p_\privacy}$ along with the number $n_\privacy$ of vehicles for each privacy level $\privacy$ to calculate the probability of receiving a message either via the cellular network or from one of the neighbors.
The probability $p(\impact_i)$ to receive a message via any interface (Cellular or Wifi) with at impact level $\impact_i$ can be calculated as shown in \autoref{eq:receive_probability}.
This formula assumes that there is no loss in the network, \ie, every transmitted messages is received by the intended receiver.

\begin{equation}\label{eq:receive_probability}
    p(\impact_i) = 1 - \prod_{\privacy \in \Privacies} (1 - p_{\privacy,i})^{n_\privacy}
\end{equation}

We use the probability $p(\impact_i)$ to receive a message to derive the expected utility $\overline{u}(\privacy_e, \vec{p_1}, \hdots, \vec{p_{\left|\Privacies\right|}})$.
This estimates the set of received messages $M_{rcv}$ using the expected amount of sent messages $a_i$ and the probability $p(\impact_i)$ to receive each message.
The resulting expected utility for the tagged vehicle is shown in \autoref{eq:expected_utility}.

\begin{equation}\label{eq:expected_utility}
    \overline{u}(\privacy_e, \vec{p_1}, \hdots, \vec{p_{\left|\Privacies\right|}}) = \sum_{i = 1}^{n_\impact} \overline{\impact}_{\privacy_e,i} \cdot a_{\privacy_e,i} \cdot \left[1 - \prod_{\privacy \in \Privacies} (1 - p_{\privacy,i})^{n_\privacy}\right]
\end{equation}
When clear from context, we refer to $\overline{u}(\privacy_e, \vec{p_0}, \hdots, \vec{p_{\left|\Privacies\right|}})$ as $\overline{u}$ to increase readability.
In the next section, we describe the process of deriving a utility-maximizing strategy for the described game.
The advantage of determining the solution analytically is (i) the possibility to analyze and bound the effects of location privacy to the system, and (ii) the lower computational complexity compared to a non-linear solver.
\subsection{Game-Theoretic Solution}
In this section, we derive the optimal strategy for a vehicle with privacy level $\privacy_{e}$ given that the privacy level and number of vehicles in each privacy level in its environment is known.
For this purpose, we calculate the partial derivatives of the expected utility $\overline{u}$ with respect to the probabilities of the tagged vehicle $p_{\privacy,i}$.
However, it is important to consider the dependency between the probabilities $p_{\privacy,i},\forall \privacy \in \Privacies$, as \autoref{eq:bandwidth} limits the possible values of $p_{\privacy,i}$.
This approach would work similarly with any other probability $p_{\privacy,i}|i \neq 1$.
We depict this dependency by expressing $p_{\privacy,1}$ depending on the other probabilities $\{p_{\privacy,i}|i > 1$\} as shown in \autoref{eq:p1}.
Thus, $p_{\privacy,1}$ depends on all other probabilities, \ie, the derivative of $p_{\privacy,1}$ with respect to any probability $p_{\privacy,i}$ is not always non-zero, which leads to our optimization problem.

\begin{equation}\label{eq:p1}
    p_{\privacy,1} \leq \dfrac{A - \sum_{i=2}^{n_\impact} a_{\privacy,i} \cdot p_{\privacy,i}}{a_{\privacy,1}}
\end{equation}

While the inequality is sufficient to guarantee the bandwidth requirements, we will assume \autoref{eq:p1} to be an equation as higher values of $p_{\privacy,1}$ cannot decrease the utility.
As there is no dependency between any pair of probabilities $p_{\privacy,i}$ and $p_{\privacy,j}$ if $i \neq j \land i \neq 1 \land j \neq 1$, the derivative of the utility with respect to $p_{\privacy,l}$ depends only on $p_{\privacy,1}$ and $p_{\privacy,l}$ for every $l > 1$ as shown in \autoref{eq:expected_utility_derivative}.
Notice that $\overline{\impact}_{\privacy_e, i} \cdot a_{\privacy_e, i} = \overline{\impact_i} \cdot a_i$ according to \autoref{eq:bandwidth_privacy} and \autoref{eq:impact_privacy}.
Additionally, we assume that $p_{\privacy_e,l} \neq 0$.
We ensure that by considering the cases with $p_{\privacy_e,l} = 0, \forall l \in \{1, \hdots, n_\impact\}$ separately as described in \autoref{subsec:optimal}.

\begin{multline}\label{eq:expected_utility_derivative}
    \dfrac{\partial \overline{u}}{\partial p_{\privacy_{e},l}} = \overline{\impact}_l a_l n_{\privacy_{e}} \cdot (1 - p_{\privacy_{e},l})^{n_{\privacy_e} - 1} \cdot P_l(\Privacies \setminus \{\privacy_e\}) +
    \\ \overline{\impact}_1 a_1 \left(\dfrac{\partial p_{\privacy_e, 1}}{\partial p_{\privacy_e, l}}\right) n_{\privacy_{e}} \cdot (1 - p_{\privacy_{e},1})^{n_{\privacy_e} - 1} \cdot P_1(\Privacies \setminus \{\privacy_e\})
\end{multline}
with
$$ P_j(\Privacies) = \prod_{\privacy \in \Privacies} (1 - p_{\privacy,j})^{n_\privacy} $$

\autoref{eq:bandwidth_privacy} displays the dependency of $p_{\privacy_e, 1}$ and $p_{\privacy_e, l}$.
Thus, the derivative of $p_{\privacy_e, 1}$ with respect to $p_{\privacy_e, l}$ can be calculated according to \autoref{eq:p1_for_pl}.

\begin{equation}\label{eq:p1_for_pl}
    \dfrac{\partial p_{\privacy_e, 1}}{\partial p_{\privacy_e, l}} = -\dfrac{a_{\privacy_e,l}}{a_{\privacy_e,1}}
\end{equation}

By setting the derivative of the utility to $0$, we determine all possibly optimal solutions.
This leads to \autoref{eq:base_equation_probability} after some minor transformations.
Notice that $a_l$ and $n_{\privacy_e,}$ are omitted as they are present on both sides of the equation.

\begin{multline}\label{eq:base_equation_probability}
    \overline{\impact}_l \cdot (1 - p_{\privacy_{e},l})^{n_{\privacy_e} - 1} \cdot P_l(\Privacies \setminus \{\privacy_e\})
    \\ = \overline{\impact}_1 \cdot \dfrac{\rho_{\privacy_e, l}}{\rho_{\privacy_e, 1}} \cdot (1 - p_{\privacy_{e},1})^{n_{\privacy_e} - 1} \cdot P_1(\Privacies \setminus \{\privacy_e\})
\end{multline}

For a given impact level $l$, we divide the set of privacy levels $\Privacies$ into $\Privacies^+(l)$, which only contains privacy levels with $p_{\privacy, l} > 0$, and $\Privacies^-(l)$, which contains privacy levels with $p_{\privacy, l} = 0$.
This is necessary, as the derivative of the expected utility with respect to $p_{\privacy, l}$ is always $0$ if $p_{\privacy, l}=0$, thus, \autoref{eq:base_equation_probability} does not hold.
However, \autoref{eq:base_equation_probability} still contains $p_{\privacy, l}, \forall \privacy \in \Privacies^+(l)$ and $p_{\privacy, 1}, \forall \privacy \in \Privacies(l)$.
We need to replace $p_{\privacy, l}, \forall \privacy \in \Privacies^+(l)\setminus \privacy_e$ to calculate $p_{\privacy_e, l}$.
We can calculate the $p_{\privacy_e, l}$ using \autoref{eq:probability_calculation} according to \autoref{proof:simplification}.

\begin{theorem}\label{proof:simplification}
For the probability $p_{\privacy_e, l}, \forall \privacy_e \in \Privacies^+(l)$ with $n_{\privacy_e}>1$, \autoref{eq:probability_calculation} holds. It depends only on $p_{\privacy_e, l}$ and previously calculated probabilities and can be used to calculate $p_{\privacy_e, l}$.
\begin{multline}\label{eq:probability_calculation}
    \impact_l \cdot \left( 1 - p_{\privacy_e, l} \right)^{n^+(l)} \\ = \impact_1 \cdot \dfrac{\rho_{\privacy_e, l}}{\rho_{\privacy_e, 1}} \cdot \left( 1 - p_{\privacy_e, 1} \right)^{n^+(l)} \cdot P_1(\Privacies^-(l))
\end{multline}
with
$$ n^+(l) = \sum_{\privacy \in \Privacies^+(l)} n_\privacy - 1$$

\begin{proof}
We use full-induction to prove the correctness of \autoref{eq:probability_calculation}.
For the base-case, we consider $\Privacies = \{ \privacy_e \}$. Based on \autoref{eq:base_equation_probability}, we observe that $P_1(\Privacies \setminus \privacy_e) = 1$ and $P_l(\Privacies \setminus \privacy_e) = 1$, as $\Privacies$ contains only $\privacy_e$. Additionally, $n^+(l) = n_{\privacy_e}-1$ for the same reason, which immediately leads to \autoref{eq:probability_calculation}.
For the induction step, we use $\Privacies^+_+(l) \subseteq \Privacies^+$ and $\Privacies^+_-(l) \subseteq \Privacies^+$ as auxiliary variables with $\privacy \in \Privacies^+_+(l) \oplus \Privacies^+_-(l), \forall \privacy \in \Privacies^+(l)$, for which the index states if they have already been included in the calculation.
Based on \autoref{eq:base_equation_probability} and \autoref{eq:probability_calculation}, we can derive \autoref{eq:full_induction_intermediate} associated $\privacy_e \in \Privacies^+_-(l)$ as intermediate state of the calculation.
Notice that $\privacy_e \in \Privacies^+_+$ by assumption.
Additionally, the privacy levels in $\Privacies^-$ are not considered on the left side of the equation, as $p_{\privacy, l} = 0, \forall \privacy \in \Privacies^-$.

\begin{multline}\label{eq:full_induction_intermediate}
    \impact_l \cdot \left( 1 - p_{\privacy_e, l} \right)^{n^+_+(l)} \cdot P_l(\Privacies^+_- \setminus \{\privacy_e\}) \\
    = \impact_1 \cdot \dfrac{\rho_{\privacy_e, l}}{\rho_{\privacy_e, 1}} \cdot \left( 1 - p_{\privacy_e, 1} \right)^{n^+_+(l)} \cdot P_1(\{\Privacies^-(l) \cup \Privacies^+_-(l)\})
\end{multline}
with
$$ n^+_+(l) = \sum_{\privacy \in \Privacies^+_+(l)} n_\privacy - 1$$

We aim to include a privacy level $\privacy_n$ into $\Privacies^+_+$.
Thus, we solve \autoref{eq:full_induction_intermediate} associated with $\privacy_n$ for $p_{\privacy_n, l}$ and insert it into \autoref{eq:full_induction_intermediate} associated with all other $\privacy_e \in \Privacies^+_-(l)\setminus\privacy_n$ to obtain \autoref{eq:full_induction_final}.

\begin{multline}\label{eq:full_induction_final}
    \impact_l \cdot \left( 1 - p_{\privacy_e, l} \right)^{n^+_+(l) + n_{\privacy_n}} \cdot P_l(\Privacies^+_- \setminus \{\privacy_e, \privacy_n\})
    = \impact_1 \cdot \\ \dfrac{\rho_{\privacy_e l}}{\rho_{\privacy_e 1}} \left( 1 - p_{\privacy_e, 1} \right)^{n^+_+(l) + n_{\privacy_n}} P_1(\{\Privacies^-(l) \cup \Privacies^+_-(l)\} \setminus \privacy_n)
\end{multline}

This equation is similar to our initial \autoref{eq:full_induction_intermediate} if we set $\Privacies^+_+ = \Privacies^+_+ \cup \privacy_n$ and $\Privacies^+_- = \Privacies^+_- \setminus \privacy_n$.
Additionally, it is evident that \autoref{eq:full_induction_final} is equal to \autoref{eq:probability_calculation} if $\Privacies^+_+ = \Privacies^+$ and $\Privacies^+_- = \emptyset$.

\end{proof}

\end{theorem}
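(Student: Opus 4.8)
\emph{Proof idea.} The plan is to prove \autoref{eq:probability_calculation} by induction, the induction variable being the number of privacy levels of $\Privacies^+(l)$ that have been absorbed into a single combined stationarity condition. The raw material is the family of first-order conditions \autoref{eq:base_equation_probability}, one per level $\privacy_e\in\Privacies^+(l)$; each still hides the cross-terms $p_{\privacy,l}$ with $\privacy\ne\privacy_e$ inside the factor $P_l(\Privacies\setminus\{\privacy_e\})$, and the whole point is to eliminate them one at a time until only $p_{\privacy_e,l}$, $p_{\privacy_e,1}$, the fixed block $P_1(\Privacies^-(l))$, and the constants $\impact_l,\impact_1,\rho_{\privacy_e,l},\rho_{\privacy_e,1}$ remain. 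For the base case $\Privacies^+(l)=\{\privacy_e\}$ the factor $P_l(\Privacies\setminus\{\privacy_e\})$ is a product over levels of $\Privacies^-(l)$ only, hence equals $1$, the exponent $n_{\privacy_e}-1$ coincides with $n^+(l)$, and $P_1(\Privacies\setminus\{\privacy_e\})=P_1(\Privacies^-(l))$, so \autoref{eq:base_equation_probability} is already \autoref{eq:probability_calculation}. The induction invariant is \autoref{eq:full_induction_intermediate}, parameterised by a splitting of $\Privacies^+(l)$ into already-processed levels $\Privacies^+_+(l)$ and still-to-process levels $\Privacies^+_-(l)$: the equation attached to $\privacy_e$ carries the common exponent $n^+_+(l)=\sum_{\privacy\in\Privacies^+_+(l)}n_\privacy-1$ on both $1-p_{\privacy_e,l}$ and $1-p_{\privacy_e,1}$, a factor $P_l$ over the unprocessed levels, and a factor $P_1$ over $\Privacies^-(l)\cup\Privacies^+_-(l)$; the base case is the instance $\Privacies^+_+(l)=\{\privacy_e\}$.

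The inductive step folds one further level $\privacy_n\in\Privacies^+_-(l)$ into the combined equation: I would solve the copy of \autoref{eq:full_induction_intermediate} attached to $\privacy_n$ for the relevant power of $1-p_{\privacy_n,l}$ and substitute it into the copies attached to the remaining levels $\privacy_e$, in which $1-p_{\privacy_n,l}$ sits inside $P_l(\Privacies^+_-\setminus\{\privacy_e\})$ with exponent $n_{\privacy_n}$. After collecting terms this must reproduce \autoref{eq:full_induction_intermediate} with $\privacy_n$ moved from $\Privacies^+_-(l)$ into $\Privacies^+_+(l)$, i.e.\ \autoref{eq:full_induction_final}: the common exponent increases from $n^+_+(l)$ to $n^+_+(l)+n_{\privacy_n}$, the level $\privacy_n$ drops out of the $P_l$ factor, and it also drops out of the argument of $P_1$. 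Iterating until $\Privacies^+_-(l)=\emptyset$ gives $n^+_+(l)=n^+(l)$, an empty $P_l$ factor, and $P_1(\Privacies^-(l))$ --- exactly \autoref{eq:probability_calculation}. The closing claim of the statement then follows because \autoref{eq:p1} pins $p_{\privacy_e,1}$ to the remaining probabilities of the same vehicle and $P_1(\Privacies^-(l))$ is built from probabilities fixed at an earlier stage, so \autoref{eq:probability_calculation} is a single scalar equation in the unknown $p_{\privacy_e,l}$.

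I expect the inductive step to be the main obstacle; it is careful algebra rather than a new idea. Three points need to be verified with care: that the power of $1-p_{\privacy_n,l}$ recovered from $\privacy_n$'s equation is exactly what cancels the $(1-p_{\privacy_n,l})^{n_{\privacy_n}}$ present in the other equations' $P_l$ factors (which in general forces a root extraction, legitimate because the quantities are nonnegative reals and, thanks to $n_{\privacy_e}>1$, the running exponent $n^+_+(l)$ is at least $1$ throughout); that the multiplicative constants recombine into precisely $\impact_1\,\rho_{\privacy_e,l}/\rho_{\privacy_e,1}$, with the cross-ratios $\rho_{\privacy_n,l}/\rho_{\privacy_n,1}$ cancelling; and that the levels in $\Privacies^-(l)$, for which $p_{\privacy,l}=0$, contribute $1$ to every $P_l$ (so never appear there) while still contributing to every $P_1$. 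The degenerate configurations with all $p_{\privacy_e,l}=0$, and levels with $n_{\privacy_e}=1$, lie outside this scheme and must be treated separately.
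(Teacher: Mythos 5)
Your proposal follows essentially the same route as the paper's own proof: induction over the privacy levels of $\Privacies^+(l)$ absorbed one at a time, with \autoref{eq:full_induction_intermediate} as the invariant, solving the copy attached to $\privacy_n$ and substituting into the remaining copies to obtain \autoref{eq:full_induction_final}, and concluding when $\Privacies^+_-(l)=\emptyset$. Your base case ($\Privacies^+(l)=\{\privacy_e\}$ with $\Privacies^-(l)$ possibly nonempty) is in fact slightly more careful than the paper's ($\Privacies=\{\privacy_e\}$), and the verification points you flag (root extraction, cancellation of the $\rho$ cross-ratios, the role of $\Privacies^-(l)$) are exactly the algebraic details left implicit there.
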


\autoref{eq:probability_calculation} still contains $p_{\privacy_e, 1}$ as an auxiliary variable.
When replacing $p_{\privacy_e, 1}$ according to its definition in \autoref{eq:p1}, we can derive the remaining variables $p_{\privacy_e, i}, \forall i > 1$ only based on the other variables $p_{\privacy_e, i}, \forall i > 1$.
For that purpose, we introduce the variable $\dependency_l$ with $1 < l \leq n_\impact$ as defined in \autoref{eq:dependency}, which encapsulates the constant values and the dependency on other privacy levels $\privacy$ for readability.
Thus, we can transform \autoref{eq:probability_calculation} to \autoref{eq:probability_calculation_no_p1} by taking the $n^+(l)$-th root and replacing $p_{\privacy_e, 1}$.

\begin{algorithm}
\SetAlgoLined
\KwResult{$p_{\privacy, i}, \forall \privacy \in \Privacies, i \in \{1, \hdots, n_\impact\}$}
 $p_{\privacy, i} \leftarrow 0, \forall \privacy \in \Privacies, i \in \{1, \hdots, n_\impact\}$\;
 $c \leftarrow \infty$\;
 \For{$i \leftarrow 1$; $c > \epsilon$; $i \leftarrow (i\ \mathrm{mod}\ |\Privacies|) + 1$}{
    $\mathrm{temp}_j \leftarrow p_{i,j}, \forall j \in \{1, \hdots, n_\impact\}$\;
    $\mathrm{recal(p_{i,j})}, \forall j \in \{1, \hdots, n_\impact\}$\;
    $c \leftarrow \sum_{j = 1}^{n_\impact}|\mathrm{temp}_j-p_{i,j}|$\;
 }
 \Return $p_{\privacy, i}, \forall \privacy \in \Privacies, i \in \{1, \hdots, n_\impact\}$\;
 \caption{Determining the optimal strategy for all privacy-levels. \textbf{recal($\hdots$)} recalculates $p_{\privacy_e, i}$ based on the current values of $p_{\privacy, i}$. $\epsilon$ is the infinitesimal.}\label{alg:solve}
\end{algorithm}

\begin{equation}\label{eq:probability_calculation_no_p1}
    1 - p_{\privacy_e, l} = \left[1 - \left( \dfrac{A}{a_{\privacy_e, 1}} - \sum_{i=2}^{n_\impact} \dfrac{a_{\privacy_e, i} \cdot p_{\privacy_e, i}}{a_{\privacy_e, 1}} \right) \right] \cdot \dependency_l
\end{equation}
with
\begin{equation}\label{eq:dependency}
    \dependency_i = \sqrt[n^+(i)]{\left(\dfrac{\impact_1}{\impact_i}\right) \cdot \left(\dfrac{\rho_{\privacy_e, i}}{\rho_{\privacy_e, 1}}\right) \cdot \prod_{\privacy \in \Privacies^-(i)}(1 - p_{\privacy,1})^{n_\privacy}}
\end{equation}

The equation system described by \autoref{eq:probability_calculation_no_p1} for all $2 \leq l \leq n_\impact$ cannot be solved without considering the dependency on the other privacy levels encapsulated in $\dependency_l$.
However, this dependency is hard to resolve except for some special cases, as it removes the linearity from \autoref{eq:probability_calculation_no_p1}.
Thus, we assume that $\dependency_l$ is constant for the calculation of $p_{\privacy_e, l}, \forall\ l = \{2, \hdots, n_\impact\}$.
Thus, we can represent $p_{\privacy_e, j} \neq 0$ as $p_{\privacy_e, i} \neq 0$ by subtracting the representation of $p_{\privacy_e, i}$ from the representation of $p_{\privacy_e, j}$ according to \autoref{eq:probability_calculation_no_p1} and obtain \autoref{eq:representation_calculation}.

\begin{equation}\label{eq:representation_calculation}
    p_{\privacy_e, i} = \dependency_i \left( \dfrac{p_{\privacy_e, j} - 1}{\dependency_{j}} \right) + 1
\end{equation}

With this assumption, we can calculate every $p_{\privacy_l, l}$ with \autoref{eq:pl}, which can be derived from \autoref{eq:probability_calculation_no_p1} and the representation of any $p_{\privacy_e, i}$ as $p_{\privacy_e, j}$ from \autoref{eq:representation_calculation}.
Notice, that $\dependency_1 = 1$, as either $p_{\privacy,1} = 0$ (then $1 - p_{\privacy,1} = 1$ and disappears), or $p_{\privacy,1} \neq 0$ (then $\privacy \notin \Privacies^-(1)$).

\begin{equation}\label{eq:pl}
    p_{\privacy_e, l} = \dfrac{\left[A - \sum_{i = 1 | i \neq l \land \privacy_e \notin \Privacies^-(i)}^{n_\impact} a_{\privacy_e, i} \right]\dependency_l}{\sum_{i = 1 | i \neq l \land \privacy_e \notin \Privacies^-(i)}^{n_\impact} \left(a_{\privacy_e, i} \cdot \dependency_i\right)} + 1
\end{equation}

Based on \autoref{eq:pl}, we can determine the strategies for each privacy level using \autoref{alg:solve}. This algorithm ensures that the initial error (induced by setting all probabilities to 0) converges, i.e the initial error constantly reduces for each iteration of \autoref{alg:solve}.
This algorithm converges immediately if there is no inter-dependency between the privacy levels, \ie, if there is no other privacy level $\privacy_o\ | \ p_{\privacy_o, i} = 0$.
If there is an inter-dependency, it converges due to three factors: (i) In the calculation of $p_{\privacy, 1}$, all probabilities $p_{\privacy, i}$ with $i>1$ are utilized, thus, $p_{\privacy, 1}$ balances the error of the other probabilities.
(ii) $p_{\privacy, 1}$ influences $\dependency_i$ of all privacy levels in $\Privacies^-(i)$, but we can see that $\dependency_l$ in the nominator and $\dependency_i$ in the denominator partially cancel out the error of each other in \autoref{eq:pl}.
(iii) $\exists l,\privacy\ |\ n_\privacy < n^+(l)$, in which case the error in $\dependency_l$ gets reduced based on the errors of the other privacy levels.

\ignore{
\begin{theorem}\label{proof:convergence}
The calculation of $p_{\privacy, l}, \forall \privacy \in \Privacies, \forall l \in \{2, \hdots, n_\impact\}$ converges when performed based on \autoref{alg:solve}.

\begin{proof}
When analysing the error propagation in the calculation process of the different probabilities $p_{\privacy_e, i}, \forall i \in \{1, \hdots, n_\impact\}$, it is evident that errors of any other privacy level $\privacy_o \in \Privacies \setminus \privacy_e$ is only propagated if (i) $\exists\ j\ | \privacy_o \in \Privacies^-(j)\setminus \privacy_e$.
However, if there is no circular dependency, \ie, there is no direct ($\nexists\ j\ | \exists \privacy_e \in \Privacies^-(j)\setminus \privacy_o$) or indirect dependency of $\privacy_o$ on $\privacy_e$, then the algorithm would simply solve $p_{\privacy_o, i}, \forall i \in \{1, \hdots, n_\impact\}$ first and calculate to the optimal solution of $p_{\privacy_e, i}, \forall i$ based on that.

If there is a circular dependency between $\privacy_e$ and $\privacy_o$, this approach is not possible.
Thus, we need analyze the behavior of probability errors through the calculation.
During the process of calculation, wrong values for $p_{\privacy, i}$ are assumed, which are influenced by a certain relative error $\error_{\privacy,i}$.
This error is reduced over the round by three factors: (i) In the calculation of $p_{\privacy, 1}$, all probabilities $p_{\privacy, i}$ with $i>1$ are utilized, thus, $p_{\privacy, 1}$ is always based on the average error.
(ii) $p_{\privacy, 1}$ influences $\dependency_i$ of all privacy levels in $\Privacies^-(i)$, but we can see that $\dependency_l$ in the nominator and $\dependency_i$ in the denominator partially cancel out the error of each other in \autoref{eq:pl}.
(iii) $\exists l,\privacy | n_\privacy < n^+(l)$, in which case the error in $\dependency_l$ gets reduced based on the errors of the other privacy levels.
These three factors lead to a rapid convergence of our algorithm to the optimal strategies for each privacy level.
\ignore{
If we assume an absolute error $\error_{\privacy_o}$ of the probability $p_{\privacy_o, 1}$ compared to the solution $p^*_{\privacy_o, 1}$, we can calculate the value for any $\dependency_i$ based on the unknown error-free $\dependency^*_i$ as follows:
$$
    \dependency_i = \dependency^*_i \cdot \left(\dfrac{1 - (p^*_{\privacy_o, 1} - \error_{\privacy_o})}{1 - p^*_{\privacy_o, 1}}\right)^{\dfrac{n_{\privacy_o}}{n^+(i)}}
$$
As shown in the algorithm, we start the calculation with $\error_\privacy = p_{\privacy, 1}$.
The relative difference between $\dependency_i$ and $\dependency^*_i$ is generally really small.
It can be bound to be larger $0$ ($p^*_{\privacy_o,1} \neq 1$, as otherwise $p^*_{\privacy_e,1} = 0$ is optimal) and smaller or equal $(1 - p^*_{\privacy_o, 1})^{-\nicefrac{n_{\privacy_o}}{n^+(i)}}$ (if $\error_i = p^*_{\privacy_o,1}$).
However, due to the start at $\error_{\privacy_o} = p_{\privacy_o, 1}$, the error is generally above $0$.
Thus, $\error_{\privacy_o} > 0$, which leads to the lower bound (when starting at $p_{\privacy,1}=1$) being close to $1$.
When reviewing \autoref{eq:pl}, we can observe that the dependency on other privacy levels stand against each other ($\dependency_l$ in the numerator, $\dependency_i$ in the denominator).
That is, an error induced by $\dependency_l$ is partially balanced by the errors induced by $\dependency_i$.
In the worst case, however, the $\dependency_i, \forall i\neq l$ is accurate, while $\dependency_l$ is inaccurate.
Thus, there is no possibility to balance the error induced by $\dependency_l$, which leads to the following equation.
$$
p_{\privacy_e,l} < (p^*_{\privacy_e,l} - 1) \cdot \left(\dfrac{1 - (p^*_{\privacy_o, 1} - \error_{\privacy_o})}{1 - p^*_{\privacy_o, 1}}\right)^{\dfrac{n_{\privacy_o}}{n^+(l)}} + 1
$$
The full error induced by $\dependency_l$ then propagates to $p_{\privacy_e, l}$.
For all other probabilities of $\privacy_e$, the error also has an impact, as it is contained in the denominator and can be estimated:
$$
p_{\privacy_e,i} > (p^*_{\privacy_e,i} - 1) \cdot \left(\dfrac{1 - p^*_{\privacy_o, 1}}{1 - (p^*_{\privacy_o, 1} - \error_{\privacy_o})}\right)^{\dfrac{n_{\privacy_o}}{n^+(l)}} + 1
$$
If we investigate the influence of this error to the calculation of $p_{\privacy_e, 1}$ as described in \autoref{eq:p1}, we observe that this error only influences the nominator of $p_{\privacy_e, l}$, while the denominator of the other parts of the sum are unaffected.
Thus, there are parts of the sum that are overestimated, while other parts of the sum are underestimated.
Thus, the error induced from $p_{\privacy_o, 1}$ is partially balanced in the calculation of $p_{\privacy_e, 1}$.
This smaller error would then propagate to the calculation of $p_{\privacy_o, 1}$, and gets reduced again.
\red{As $\error_{\privacy} \geq 0, \forall \privacy \in \Privacies$}, this will lead to convergence.
}


\end{proof}
\end{theorem}
}

\subsection{Deriving the Utility-Optimal Strategy}\label{subsec:optimal}
In the previous section, we assumed that every probability under consideration is non-zero.
To calculate the overall optimal strategy, we consider every possible combination of zero and non-zero probabilities of every privacy level, \ie, we consider every possible combination of $\Privacies^+(l)$ and $\Privacies^-(l)$.
That is, the computational complexity of our approach is $\mathcal{O}(2^{|\Privacies| \cdot n_\impact})$, \ie,  is exponential with the number of privacy levels $|\Privacies|$ and the number of impact levels $n_\impact$.
This exponential growth is justified by the separate consideration of zero probabilities, which leads to $2$ tries per probability.
While an exponential growth is generally bad, we need to remember the limited size of $|\Privacies|$ and $n_\impact$.
As every single computation of probabilities is very fast, the total computation time of the probabilities remains comparably small (in our experiments, it stayed around $100ms$).
In the calculation, we set the probabilities of all $p_{\privacy, l} = 0\ |\ \privacy \in \Privacies^-(l)$ and only calculate the remaining probabilities with our approach proposed in the previous section.

\section{Analysis and Required Adaptation}\label{sec:analysis}
In this section, we analyze the properties of the solution for our non-cooperative game and the implications for real-world applicability.
We start with a analysis of the properties of our approach.
Next, we perform a numerical analysis of our developed solution, with the goal to showcase the influence of location privacy on a heterogeneous network in a controlled environment.
The numerical analysis includes the bandwidth dependency of our approach, the performance depending on the neighborhood, and the consequences of different neighborhood of vehicles.

\subsection{Properties of the Developed Solution}
Our found solution has certain properties, on which we elaborate in the following.

\subsubsection{Optimality}
For each possible set of $\Privacies^+(j), \forall j \in \{1, \hdots, n_\impact\}$, the partial derivatives of the utility with respect to all probabilities are $0$, \ie, are either local optima or saddle points.
To proof that the found solutions are global optima, we need to ensure that there is no other optimum with a higher utility than the found solution.
For this purpose, we investigate on the second derivative of the utility function as shown in \autoref{eq:expected_utility_2nd_derivative}.

\begin{multline}\label{eq:expected_utility_2nd_derivative}
    \dfrac{\partial^2 \overline{u}}{\partial^2 p_{\privacy_{e},l}} = -\overline{\impact}_l \cdot \Psi_l - \overline{\impact}_1 \cdot \dfrac{a_l}{a_0} \cdot \left(-\dfrac{\rho_{\privacy_e, l}}{\rho_{\privacy_e, 1}}\right)^2 \cdot \Psi_1
\end{multline}
with
$$ \Psi_j = a_l \cdot n_{\privacy_{e}} \cdot (n_{\privacy_{e}} - 1) \cdot (1 - p_{\privacy_{e},l})^{n_{\privacy_e} - 2} \cdot P_j(\privacy \in \Privacies \setminus \{\privacy_e\}) $$

As $\Psi_i$, $\overline{\impact}_i$, and $a_i$ are non negative for all $i$, the second derivative of the utility with respect to any probability $p_{\privacy_e, l}$ is always smaller or equal to $0$.
Thus, the expected utility presented in \autoref{eq:expected_utility} is concave.
This guarantees that the found solution maximizes the utility, but is not necessarily unique, \ie, there might be other solutions with similar utility.

\subsubsection{Stability}
According to \autoref{theo:nash}, the found solution of our non-cooperative game is an unstable Nash equilibrium.
Thus, this equilibrium is only followed if every vehicle is aware that its neighbors follow this strategy.

\begin{theorem}\label{theo:nash}
The solution of our non-cooperative game shown in \autoref{eq:pl} is an unstable Nash equilibrium, \ie, no vehicle has an incentive to deviate from the found solution.

\begin{proof}
First, we show that \autoref{eq:pl} is a Nash equilibrium.
For that purpose, we modify \autoref{eq:expected_utility} such that the strategy $\vec{q}$ of the tagged vehicle is not necessarily similar to the strategy of its neighbors.
We refer to this modified utility with $\overline{u}^*(\privacy_e, q, \vec{p_0}, \hdots, \vec{p_{\left|\Privacies\right|}})$ as shown in \autoref{eq:expected_utility_ego}.
For better readability, we omit the parameters when clear from context.

\begin{equation}\label{eq:expected_utility_ego}
    \overline{u}^* = \sum_{i = 1}^{n_\impact} \overline{\impact}_{\privacy_e,i} \cdot a_{\privacy_e,i} \cdot \left[1 - P_i \cdot (1 - p_{\privacy_e})^{n_{\privacy_e, i} - 1} \cdot (1 - q_i)\right]
\end{equation}

By assumption, the tagged vehicle also needs to stick to its bandwidth requirements, thus, \autoref{eq:p1} holds.
As a lower utilization of bandwidth cannot increase the utility (as the bandwidth is not part of the utility function), we assume the bandwidth inequality to be an equality.
As the the modified expected utility is rather similar to the expected utility, we can derive:
$$ \dfrac{\partial \overline{u}^*}{\partial q_l} \cdot n_{\privacy_e} = \dfrac{\partial \overline{u}}{\partial p_{\privacy_e,l}}, \forall\ l \in \{1, \hdots, n_{\impact}\} $$
As the partial derivative of the expected utility is $0$ in our solution, it follows that the partial derivative of the modified expected utility is also $0$.
This is shown in \autoref{eq:nash}.
\begin{equation}\label{eq:nash}
\dfrac{\partial \overline{u}}{\partial p_{\privacy_e,l}} = 0 \Rightarrow \dfrac{\partial \overline{u}^*}{\partial q_l} = 0, \forall\ l \in \{1, \hdots, n_{\impact}\}    
\end{equation}
Additionally, the second derivative is always $0$, as the first derivative of modified expected utility $\overline{u}^*$ for any $q_l$ is independent of $q_l$.
Thus, any strategy that fully utilizes the available bandwidth for the tagged vehicle is optimal, as long as the derivative is valid, which assumes the same setup of zero and non-zero probabilities.
A different setup of zero and non-zero probabilities can never be better, as this setup is better for the overall solution, \ie, would have been selected.
Consequently, the found solution is an unstable Nash equilibrium.
\end{proof}
\end{theorem}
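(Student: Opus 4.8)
The plan is to split the claim into two parts: (i) the strategy profile of \autoref{eq:pl} is a Nash equilibrium — no vehicle can strictly improve its utility by unilateral deviation — and (ii) it is \emph{unstable}, meaning a whole continuum of best responses exists so the equilibrium is not attracting. For part (i) I would first introduce a modified expected-utility function $\overline{u}^*$ in which the tagged vehicle plays an arbitrary strategy $\vec{q}$ while every neighbor still plays the symmetric solution $\vec{p}_\privacy$; this is the natural deviation-utility of one player against a fixed population. Since $\overline{u}^*$ is just $\overline{u}$ with the tagged vehicle's contribution to each product factored out and replaced by $(1-q_i)$ rather than $(1-p_{\privacy_e,i})$, the key observation is the proportionality $\partial \overline{u}^*/\partial q_l = (1/n_{\privacy_e})\,\partial \overline{u}/\partial p_{\privacy_e,l}$. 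Because the solution in \autoref{eq:pl} was constructed precisely so that the latter derivative vanishes (at least on the active coordinates $\Privacies^+(l)$, with the zero coordinates handled by the case split of \autoref{subsec:optimal}), the former derivative vanishes too.

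Next I would argue this stationary point of $\overline{u}^*$ is in fact a maximizer of $\overline{u}^*$ over the feasible region. Here the concavity argument already carried out for $\overline{u}$ in \autoref{eq:expected_utility_2nd_derivative} transfers: $\overline{u}^*$ is affine — hence concave — in each $q_l$ once the other neighbors are held fixed (indeed the second derivative in $q_l$ is identically $0$, because $\overline{u}^*$ is linear in $q_l$), and the bandwidth constraint \autoref{eq:p1}, which the tagged vehicle must also respect, is a linear (convex) constraint. A stationary point of a concave function over a convex set is a global maximizer, so $\vec q = \vec p_{\privacy_e}$ is a best response, establishing the Nash property for deviations that keep the same support pattern; deviations that change the support pattern cannot do better because the global optimization in \autoref{subsec:optimal} already ranges over all support patterns and would have picked a better one if it existed.

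For part (ii), the instability, I would lean on exactly the linearity observed above: since $\overline{u}^*$ is affine (not strictly concave) in each $q_l$, and the gradient is zero, \emph{every} feasible $\vec q$ that saturates the bandwidth constraint \autoref{eq:p1} with the same zero/non-zero pattern yields the same utility. Thus the best-response correspondence is a whole face of the simplex rather than a single point, so a vehicle is indifferent among a continuum of strategies and only the \emph{coordinated} choice of the symmetric profile realizes the cooperative gain; a small perturbation is not self-correcting. I expect the main obstacle to be making the support-pattern argument fully rigorous: one must be careful that when a deviating vehicle moves a probability off or onto the boundary $\{0,1\}$, the comparison is genuinely covered by the exhaustive case analysis of \autoref{subsec:optimal}, and that the derivative identity $\partial \overline{u}^*/\partial q_l = (1/n_{\privacy_e})\,\partial \overline{u}/\partial p_{\privacy_e,l}$ is invoked only on coordinates where both sides are meaningful (i.e.\ $p_{\privacy_e,l}\neq 0$, as assumed in deriving \autoref{eq:expected_utility_derivative}). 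Handling the interaction between the $q_1$-elimination via the bandwidth equality and the chain-rule term $\partial p_{\privacy_e,1}/\partial p_{\privacy_e,l}$ cleanly is the delicate bookkeeping step.
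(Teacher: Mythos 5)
Your proposal is correct and follows essentially the same route as the paper's own proof: the same modified utility $\overline{u}^*$ for a unilateral deviation, the same proportionality $\partial \overline{u}^*/\partial q_l \propto \partial \overline{u}/\partial p_{\privacy_e,l}$, the same use of linearity in $q_l$ (vanishing second derivative) to conclude that every bandwidth-saturating strategy with the same support pattern is a best response, and the same appeal to the exhaustive support-pattern search of \autoref{subsec:optimal} to rule out better deviations with a different zero/non-zero setup. Your explicit invocation of concavity over a convex feasible set and your spelled-out reading of \emph{instability} as a continuum of indifferent best responses merely make precise what the paper states more tersely.
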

\begin{table}[t]
    \centering
    \begin{tabularx}{\linewidth}{l|X}
        Parameter & Value \\ \hhline{=|=}
        Bandwidth & $10\%$ of required \\ \hline
        Event impact (vector) & $(1, 10, 100, 1000)$ \\ \hline
        Event frequency (vector) & $(90\%, 9\%, 0.9\%, 0.1\%)$ \\ \hline
        Event distribution range (vector) & $(10km, 1km, 100km, 100km)$
    \end{tabularx}
    \caption{Default parameters of the numerical analysis and the evaluation.}
    \label{tab:parameters_numerical}
\end{table}
\subsubsection{Implicit Coordination of Responsibility}\label{subsec:implicit_coordination}
As described in \autoref{subsec:optimal}, our approach selects the optimal strategy for every privacy level by choosing a strategy composed of zero and non-zero probabilities.
Thus, our approach is capable of assigning tasks to certain privacy levels.
A single privacy level might be sufficient to ensure the reception of data of a certain privacy level, such that $|\Privacies^+(j)| = 1, \forall j \in \{1, \hdots, n_\impact\}$.
Thus, the other privacy levels do not aim to receive the messages of this impact level, but use their bandwidth for the reception of messages of different impact levels.
In general, privacy levels with a low imprecision aim to receive messages with high context-sensitivity, \ie, a small message dissemination distance $r_i$.
On the other hand, privacy-aware vehicles will aim to receive messages with a high message dissemination distance $r_i$, as the effects of location privacy is smaller for this type of messages as shown in \autoref{proof:bandwidth_decrease}.
This coordination of responsibility drastically improves the performance of our offloading approach, but also introduces additional challenges regarding the possibility of errors in this coordination.
We provide a deeper insight in these errors in \autoref{subsec:numerical}.


\subsection{Numerical Analysis of Influence Factors}\label{subsec:numerical}
In the following, we investigate two properties of our approach by performing a numerical analysis: (i) the compensation of negative impacts of location privacy, and (ii) the influence of wrong estimation of the neighborhood of vehicles in proximity.
The default parameters for the numerical evaluation can be found in \autoref{tab:parameters_numerical}.

\paragraph{Compensation for Location Privacy}
\autoref{fig:bandwidth_share_privacy} displays the ability of our proposed solution to compensate for location privacy of vehicles in the network.
It is evident, that the possibility of compensation depends on the degree of location privacy in the system, which is influenced by the radius of the imprecision area and the share of privacy-sensitive vehicles in the system.
\autoref{subfig:bandwidth_share_low_privacy} displays the influence of a comparably low level of location privacy, in which the radius of the imprecision area is only $100m$.
For this scenario, the influence of privacy is minimal, as the system also benefits from the possibility of coordination as described in \autoref{subsec:implicit_coordination}.
For a large area of imprecision like in \autoref{subfig:bandwidth_share_high_privacy}, the performance slightly decreases with an increasing share of privacy-sensitive vehicles until roughly $85\%$ of privacy-sensitive vehicles.
However, when we assume a system in which the majority of vehicles is privacy-sensitive, an increase in the share of privacy-agnostic vehicles provides a huge performance gain to the system.
This finding is very interesting, as it enables the compensation for location privacy of a subset of vehicles even for high privacy levels.

\begin{figure}
    \centering
    \subfloat[Low location privacy ($0.1km$).]{\includegraphics[width=0.45\linewidth]{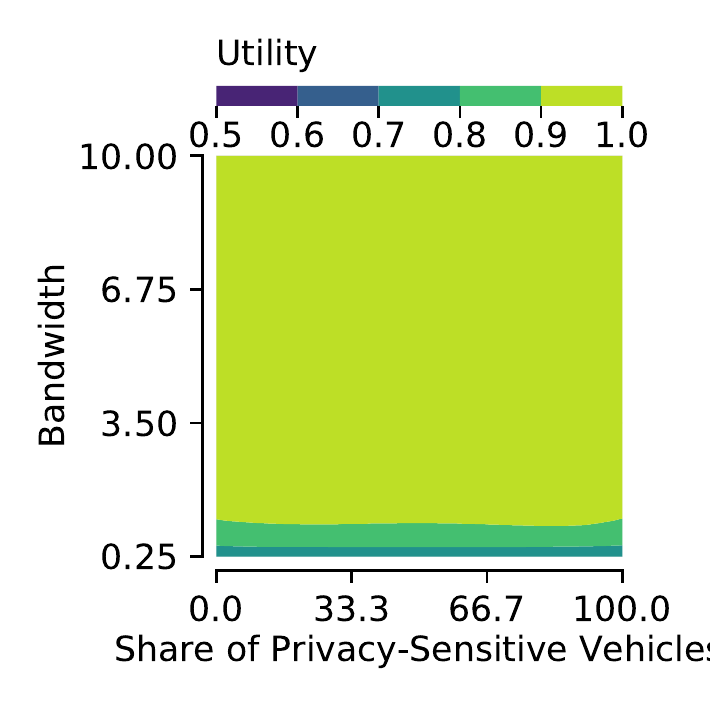}\label{subfig:bandwidth_share_low_privacy}} \quad
    \subfloat[High location privacy ($10km$).]{\includegraphics[width=0.45\linewidth]{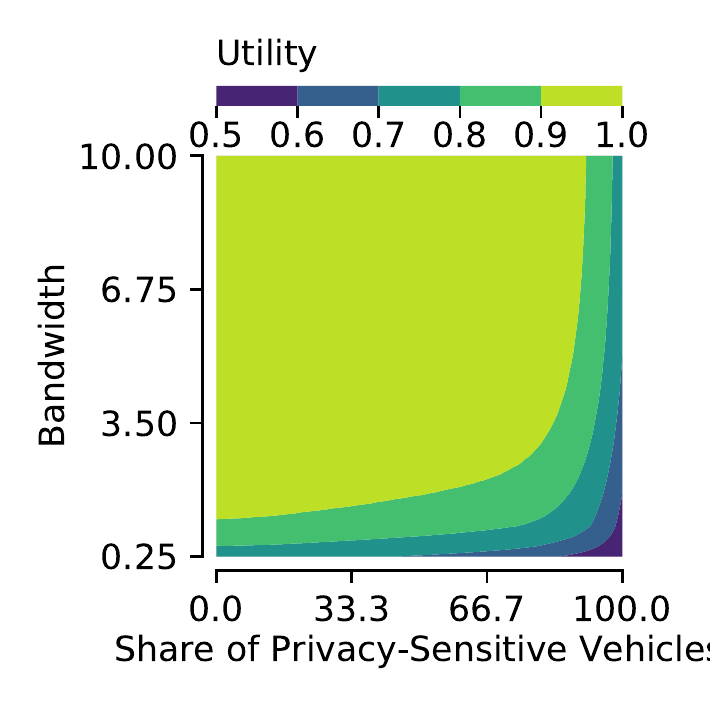}\label{subfig:bandwidth_share_high_privacy}}
    \caption{Utility depending on bandwidth and privacy.}
    \label{fig:bandwidth_share_privacy}
\end{figure}

\paragraph{Wrong Neighborhood Estimation}
\autoref{fig:diff_wrong_prediction} displays the changes in utility if the vehicles of a certain privacy-level have a different number of neighbors than the vehicle in another privacy level.
This leads to a wrong prediction of the strategy of the other privacy level, as the assumption of a similar neighborhood as presented in \autoref{sec:game} does not hold.
\autoref{subfig:overestimation} shows the influence of an overestimation of the number of neighbors, \ie, the neighborhood of a vehicle in proximity of the tagged vehicle is larger than the actual neighborhood of the tagged vehicle.
An overestimation of the number of neighbors seems to produce only minor issues, as the loss of utility is always below or equal $0.6\%$.
Compared to that, the underestimation severely impacts the performance of our approach, which leads to a loss of up to $50\%$ in certain situations.
Additionally, it can be observed that the loss in performance is only significant at the diagonal of the plot, \ie, if the number of privacy-sensitive and not privacy-sensitive vehicles is close.
This is justified as the privacy level with a higher influence (with more vehicles) has more possibility to cover certain impact levels, such that the other privacy levels cover the remaining impact levels.
However, if both privacy levels assume that they are the more influential, this approach fails, leading to impact levels that are not covered by any privacy level.
This leads to a severe drop in performance and needs to be prevented.
To achieve this, we need to restrict the implicit assignment of impact levels to certain privacy levels.
Thus, we perform a worst-case analysis, \ie, the vehicle checks what happens if its neighbors do not cooperate.
Depending on the trust in its neighbors, the expected utility with neighbors and the worst-case utility are weighted to receive the final score of a strategy.

\begin{figure}
    \centering
    \subfloat[Overestimation.]{\includegraphics[width=0.45\linewidth]{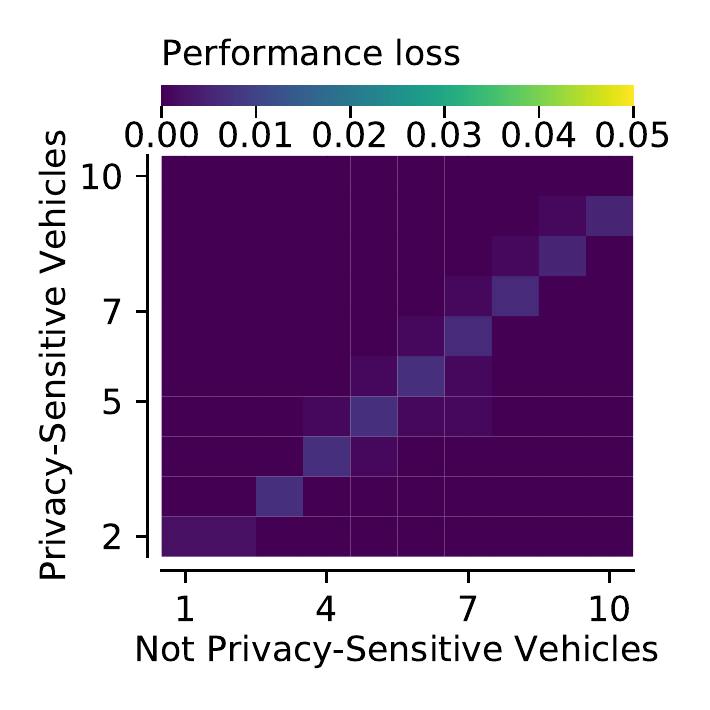}\label{subfig:overestimation}} \quad
    \subfloat[Underestimation.]{\includegraphics[width=0.45\linewidth]{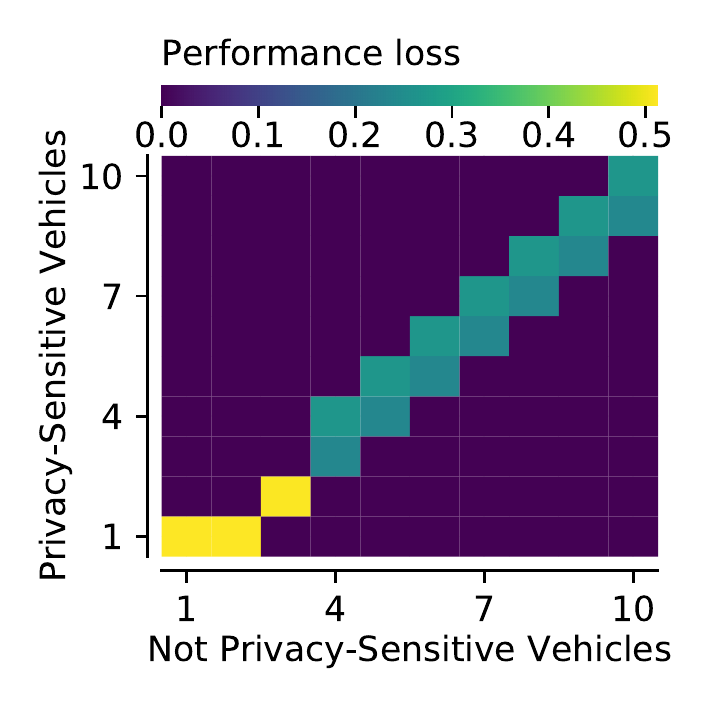}\label{subfig:underestimation}}
    \caption{Difference between expected utility and actual utility.}
    \label{fig:diff_wrong_prediction}
\end{figure}


\section{Evaluation}\label{sec:evaluation}
In this section, we evaluate the performance of our approach in a realistic vehicular network under varying environmental conditions.
For this purpose, we utilize the vehicular extension of the Simonstrator framework \cite{MBSB19} in conjunction with SUMO \cite{SUMO2018} to simulate a vehicular network in Cologne~\cite{6117125}.
We compare our approach with state-of-the-art methods for cooperative communication in large-scale vehicular networks and non-cooperative approaches.
In this large-scale vehicular network, messages are provided based on the current location of the vehicle (considering its privacy restrictions).
In our simulation, we generate messages randomly in an area of roughly $220 \times 220 km^2$, while the movement of vehicles and their networking is only simulated in an area of $2 \times 2 km^2$ to reduce the computational overhead.
As all events with a possible influence to the network are simulated, we accurately model the message load in a large-scale vehicular network.

\subsection{Reference Approaches}
In the following, we describe the evaluated approaches.

\subsubsection{Game-Theoretic Privacy-Sensitive Cooperation (GTP)}
This is our approach proposed in \autoref{sec:game}, which relies on implicit coordination between vehicles.

\subsubsection{No Cooperation (NC)}
The No-Cooperation (NC) approach does not consider cooperation between vehicles.
Thus, vehicles using the \textit{NC} approach receive similar messages as their neighbors, \ie, cannot share their messages.

\subsubsection{Clustering with perfect failure detection (GK)}
Clustering is a common strategy for the local distribution of messages in vehicular networks.
Instead of every vehicle communicating, only the so-called cluster-head is communicating directly with the server and provides received messages to vehicles in proximity via V2V communication.
This clustering approach can detect disconnects immediately and is used as an (unrealistic) upper bound for the performance of our approach.

\subsubsection{Clustering without perfect failure detection (CL)}
This clustering approach is similar to the previous approach, with the exception that the detection of disconnects is now imperfect.
Thus, the vehicles need to wait for a timeout until they detect a disconnect and reorganize the cluster.
This approach is more realistic than the previous \textit{GK} approach.

\subsection{Metrics}
We use two metrics to evaluate the performance of our approach: the \textit{achieved relative utility} and the \textit{used bandwidth}.
The \textit{achieved relative utility} measures the performance of the network, \ie, how much data is provided to a vehicle in the network.
This metric is between $0$ and $1$, where $1$ states that the vehicle has received everything that was sent and $0$ states that the vehicle has received nothing.
The \textit{used bandwidth} captures if the approaches stick to their average bandwidth limitations, \ie, if the side condition of the game is fulfilled.

\subsection{Plots}
We use box-plots and line-plots to visualize our results.
In the box-plots, the boxes show the differences between vehicles inside of one simulation run.
Next to each box, there is a line with a dot, visualizing the average value over all vehicles and simulation runs and the standard deviation of the average of all vehicles.
In line-plot, the line displays the mean value for the vehicles in one simulation run.

\subsection{Evaluation Results}
\begin{figure}
    \centering
    \subfloat[Achieved relative utility.]{\includegraphics[width=.45\linewidth]{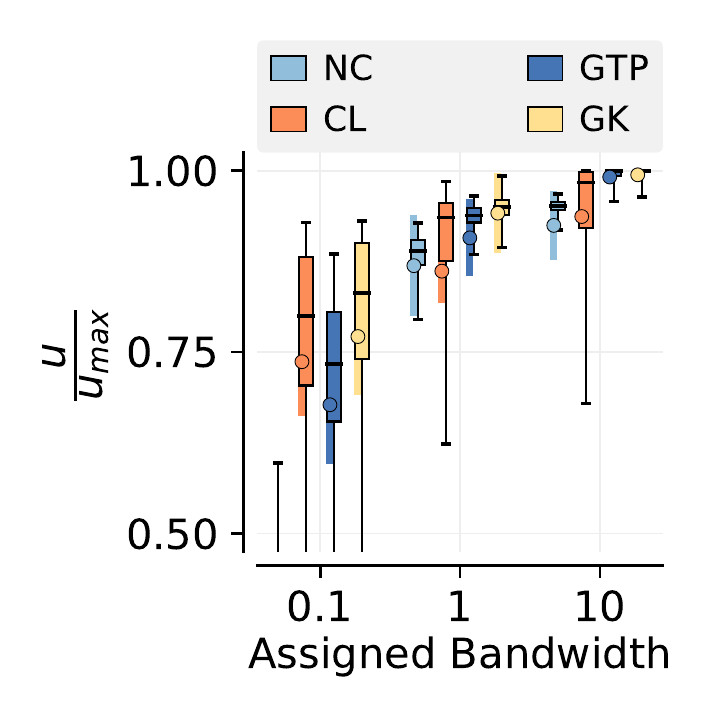}\label{subfig:eval_bandwidth_utility}} \quad
    \subfloat[High-impact.]{\includegraphics[width=.45\linewidth]{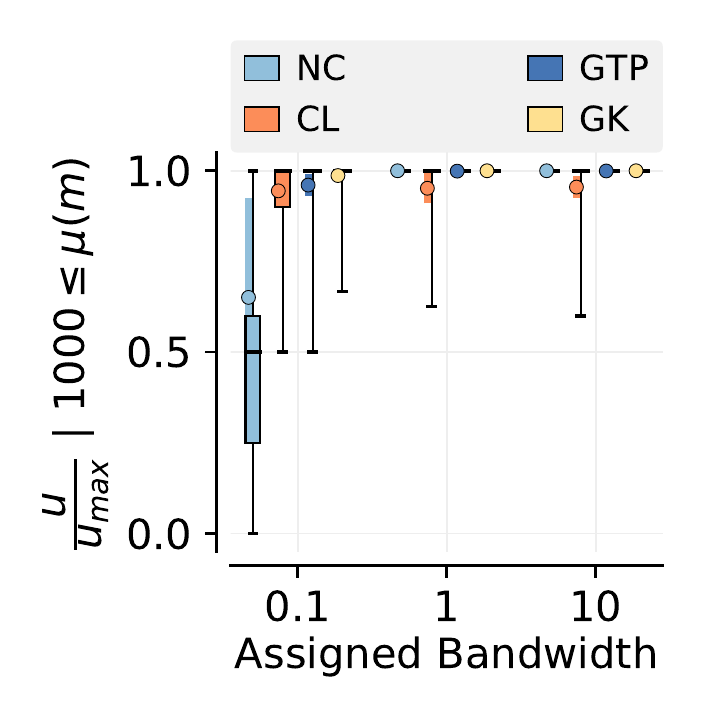}\label{subfig:eval_bandwidth_high_low}}
    \caption{Achieved relative bandwidth for different bandwidths.}
    \label{fig:eval_bandwidth}
\end{figure}
\autoref{fig:eval_bandwidth} depicts the performance of the approaches under different available bandwidths to each individual vehicle.
It is evident that the performance of all approaches increases with increasing bandwidth for all approaches as depicted by \autoref{subfig:eval_bandwidth_utility}.
For a full reception of all data available in the network via cellular, a bandwidth of roughly $100$ messages per second is required.
Even with a much smaller bandwidth of $10$ messages per second, all approaches can achieve reasonable utility levels by prioritizing high-impact messages.
It can be observed that our \textit{GTP} approach outperforms the \textit{CL} approach as well as the \textit{NC} approach and has much smaller confidence intervals compared to the \textit{CL} approach.
Thus, our approach is more resilient and adaptive to different network conditions.
Additionally, our approach is very close in performance to the \textit{GK} approach.
The same holds for a bandwidth of $1$, while our approach decreases in performance for a bandwidth of $0.1$.
For a bandwidth of $0.1$, our approach performs worse than the \textit{CL} approach, as the redundant transmission of high-impact messages and the missing explicit coordination between vehicles decrease the performance of our \textit{GTP} approach.
This is also confirmed by \autoref{subfig:eval_bandwidth_high_low}:
For the high-impact messages, our approach performs well for both a bandwidth of $1$ and $10$, but struggles to receives the high-impact messages for a bandwidth of $0.1$.
That is, a bandwidth of $0.1$ is not sufficient to receive the high-impact messages using only the available bandwidth of a single vehicle.
Thus, the performance of our approach decreases below the performance of the \textit{CL} approach, as the explicit coordination of vehicles in clustering approaches can handle low bandwidths well.
Additionally, all approaches stick to the available bandwidth on average, while the bandwidth is temporarily exceeded by a subset of vehicles.
This exceeding of bandwidth is justified by (i) the different number of available messages depending on the event location and (ii) the cooperative reception of messages by vehicles.

\autoref{fig:eval_privacy} displays the influence of privacy on our realistic vehicular network if the privacy-sensitive vehicles use an area of imprecision with radius $10km$.
\autoref{subfig:eval_privacy_all} shows the behavior of the relative utility for all of the approaches.
The \textit{NC} approach decreases the most, as the privacy-sensitive vehicles have no possibility to compensate for their context imprecision.
Additionally, our \textit{GTP} approach constantly outperforms the \textit{CL} approach and the \textit{NC} approach independent of the level of privacy.
Most interestingly, the performance decrease of our \textit{GTP} approach compared to the \textit{GK} approach is not constant, it is lowest around $50\%$ privacy.
This can be justified by implicit coordination between privacy levels as described in \autoref{subsec:implicit_coordination}.
This is also visible in \autoref{subfig:eval_privacy_low}, which displays the relative utility of messages with an impact between $10$ and $100$.
While the \textit{NC} approach is not able to receive this messages at all, the utility of the other approaches decreases constantly.
However, for our \textit{GTP} approach, the utility remains constant for a very long duration, which leads to a comparably constant overall utility even for high privacy levels.

\begin{figure}
    \centering
    \subfloat[Overall.]{\includegraphics[width=.45\linewidth]{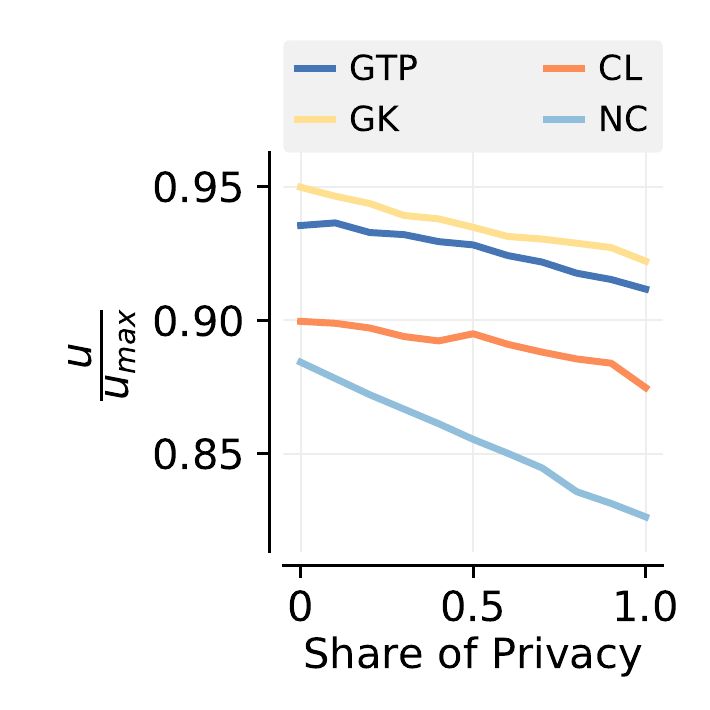}\label{subfig:eval_privacy_all}} \quad
    \subfloat[Low-impact.]{\includegraphics[width=.45\linewidth]{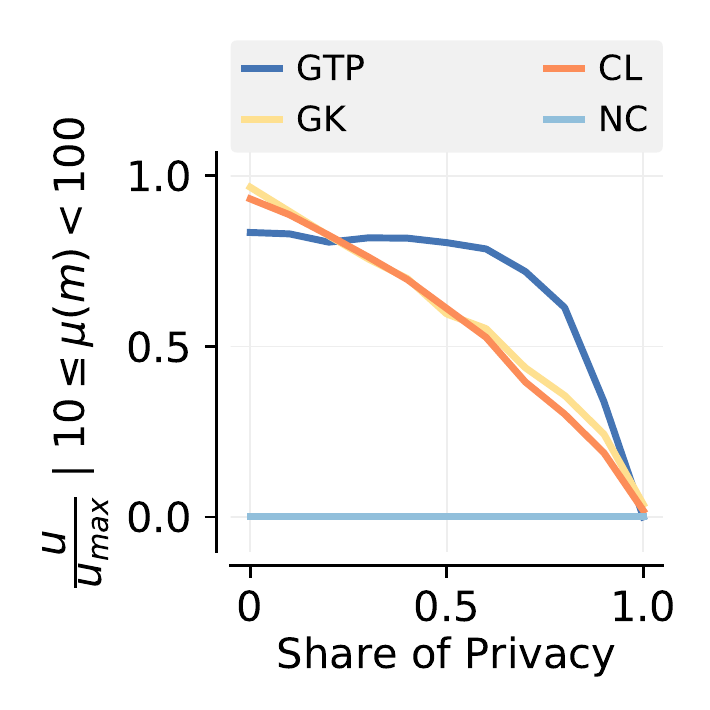}\label{subfig:eval_privacy_low}}
    \caption{Achieved relative utility for mixed environments.}
    \label{fig:eval_privacy}
\end{figure}

\section{Conclusion}\label{sec:conclusion}
In this paper we introduce privacy considerations in the management of \gls{fcd} and have shown its impact on location-based services, since some data are not forwarded to a vehicle due to privacy considerations and the implemented location obfuscation. In order to alleviate this problem, we have introduced cooperation among vehicles so as to forward relevant data to their neighboring vehicles, enhancing in principle the data received by a vehicle only directly from the remote server. In this work, an ad-hoc, direct V2V cooperation paradigm is employed instead of a cluster-based one, also showing the high performance deterioration of the latter in a real vehicular networking environment.
A major contribution of this work is the development and study of a non-cooperative game determining the strategies (in terms of probabilities that a vehicle is forwarded by the server data of a given impact index) that vehicles should follow, so that a properly defined utility is maximized; this is shown to lead to a diversification of the data received directly from the server by neighboring vehicles and increases the effectiveness of V2V cooperation. 

Our numerical analysis show that the influence of privacy can be compensated as long as the share of privacy-sensitive vehicle is below $85\%$.
Above that threshold, an increase in the share of privacy-sensitive vehicles drastically decreases the performance of the system.
In the evaluation, we analyzed the performance of our approach in a realistic vehicular network.
Our results show the drastic performance increase compared to non-cooperative approaches and the improvements over cluster-based approaches.
Additionally, our approach performs almost similarly to a perfect clustering approach, which utilizes bandwidth optimally and detects disconnects immediately, but is not realizable in reality.
When we analyze the performance of our approach for different privacy levels, we see that the performance remains constant for a long time, which confirms the results from our numerical analysis.

\ignore{
------------ BELOW IS A MODIFICATION OF EXISTING CONCLUSIONS -----

In this paper, we introduced location privacy and analyzed its impact on location-based services in a vehicular environment.
These services are supported by messages of different impact level provided to the vehicles by a server. 
In general, vehicles aim at maximizing the benefit of such services by receiving the all highest impact messages possible for the available cellular bandwidth. Our proposed solution utilizes a game-theoretic approach to enhance the diversity of messages directly provided by the server to neighbohring vehicles and then employs a direct, ad hoc V2V communication to provide \gls{fcd} not directly provided by the server.

We have shown that our ad hoc, V2V approach is much stabler for the reception of high-impact messages compared to cluster-based approaches.

Additionally, through the V2V direct communication between neighbors, the impact of location privacy is decreased.

-------------------- OLD BELOW

In this paper, we introduced location privacy and analyzed its impact on location-based services in a vehicular environment.
Through these location-based services, messages are distributed which differ in size and \textit{impact} for the network.
In general, vehicles aim at maximizing their advantage through communication by receiving only messages with high impact.
Our proposed solution utilizes the direct, ad hoc communication capabilities to coordinate the recipience of \gls{fcd}.
While state-of-the-art approaches rely on explicit coordination in the form of clusters, we employ a game-theoretic solution to coordinate the reception of messages.
While cluster-based approaches are complex and hard to implement due to the high impact of message drops on the ad hoc channel, our approach relies only on the number of vehicles in the neighborhood, which leads to a very stable coordination.

In the analysis of our approach, we proofed that our approach is much stabler in the recipience of high-importance messages compared to cluster-based approaches.
Additionally, we reduce the impact of location privacy to the network, as privacy-sensitive vehicles aim at receiving messages with high distribution area, while non-privacy-sensitive vehicles aim at receiving messages with a small distribution area.
This minimizes the effect of location privacy, as we show that the additional overhead is small of the message is distributed in a large area.
Thus, the negative impact of location privacy, providing the possibility for efficient privacy-sensitive vehicular networks.
}


\bibliographystyle{IEEEtran}
\bibliography{template}
\end{NoHyper}
\end{document}